\newtheorem{theorem}{Theorem}
\newtheorem{example}{Example}
\begin{document}

\title{Contextuality scenarios arising from networks of stochastic processes}
\date{}                                           

\author{
Rodrigo Iglesias\\
\small Instituto de Matem\'atica de Bah\'ia Blanca, CONICET\\ [-0.8ex]
\small Departamento de Matem\'atica, Universidad Nacional del Sur\\ [-0.8ex]
\small Bah\'ia Blanca, Argentina.
 \and
Fernando Tohm\'e\\
\small Instituto de Matem\'atica de Bah\'ia Blanca, CONICET\\ [-0.8ex]
\small Departamento de Econom\'ia, Universidad Nacional del Sur\\ [-0.8ex]
\small Bah\'ia Blanca, Argentina.
\and
Marcelo Auday\\
\small Instituto de Investigaciones Econ\'omicas y Sociales del Sur, CONICET\\ [-0.8ex]
\small Departamento de Humanidades, Universidad Nacional del Sur\\ [-0.8ex]
\small Bah\'ia Blanca, Argentina.
}

\maketitle
\begin{abstract}
An {\em empirical model} is a generalization of a {\em probability space}. It consists of a simplicial complex of subsets of a class $\mathcal{X}$ of random variables such that each simplex has an associated probability distribution. The ensuing marginalizations are coherent, in the sense that the distribution on a face of a simplex coincides with the marginal of the distribution over the entire simplex.

An empirical model is said {\em contextual} if its distributions cannot be obtained  marginalizing a joint distribution over $\mathcal{X}$. Contextual empirical models arise naturally in quantum theory, giving rise to some of its counter-intuitive statistical consequences.

In this paper we present a different and classical source of contextual empirical models: the interaction among many stochastic processes. We attach an empirical model to the ensuing network in which each node represents an open stochastic process with input and output random variables. The statistical behavior of the network in the long run makes the empirical model generically contextual and even strongly contextual.

\end{abstract}

\allowdisplaybreaks

 \section{Introduction}

 Let $\mathcal{X}= \{X_1,X_2,X_3\}$ be a set of random variables. A probability distribution of the joint random variable $(X_1,X_2,X_3)$ gives rise to a family of marginal distributions of $(X_1,X_2)$, $(X_2,X_3)$, $(X_3,X_1)$, $X_1$, $X_2$ and $X_3$. This family of distributions is said {\em consistent} in the sense that the distribution of $X_i$ is the marginal of $(X_j,X_i)$ as well as of $(X_i,X_k)$.

Not all consistent families of distributions on $(X_1,X_2)$, $(X_2,X_3)$ and $(X_3,X_1)$ arise as marginalizations of a joint distribution over $(X_1,X_2,X_3)$:
\begin{example}
 \label{example}
 Let $X_1, X_2, X_3$ be random variables with values in $\{0,1\}$ and let
 \begin{align*}
 P(X_1=0 \wedge X_2=0)=P(X_1=1 \wedge X_2=1) =  1/2 \\
 P(X_2=0 \wedge X_3=0)=P(X_2=1 \wedge X_3=1) =  1/2\\
 P(X_3=0 \wedge X_1=1)=P(X_3=1 \wedge X_1=0) = 1/2
 \end{align*}
This is a consistent family of distributions in the above sense but there is no probability assignment $P(X_1=0 \wedge X_2=0 \wedge X_3=0)$ of which they are its marginals.
\end{example}

This observation goes back to the seminal work of George Boole \cite{Boole}, who studied the conditions that a set of probabilities of logically related events must satisfy. Similar behaviors arise in other cases (most notably in quantum mechanics). An encompassing mathematical framework for their analysis is that of \textbf{empirical models} \cite{A-B}.

An empirical model consists of  a set $\mathcal{X}$ of random variables $X_1,...,X_n$, a family of subsets of $\mathcal{X}$ constituting an abstract simplicial complex, and a family of probability distributions, one for each simplex, satisfying a consistency condition. Namely, that if $\mathcal{U'} \subseteq \mathcal{U}$ are two simplices of the complex, the distribution attached to $\mathcal{U'}$ is identical to the marginal on $\mathcal{U'}$ of the distribution over $\mathcal{U}$.

Given a particular empirical model a fundamental question is whether there exists a joint distribution probability of $(X_1,...,X_n)$ such that all the distributions over subsets of variables are obtained as its marginalization. If such a joint distribution exists, the family of consistent distributions over the subsets is said {\em extendable} \cite{Vorobev}. If, on the contrary, the joint distribution does not exist, the empirical model is said to be {\em contextual}.

Vorobev \cite{Vorobev} gave a combinatorial characterization of those simplicial complexes
for which any family of consistent distributions is extendable. The failure in satisfying some of the assumptions that lead to this result allows the emergence of contextuality. This case, more interesting than the extendable setting, motivates the generalization of probability spaces to empirical models.


The main motivation for considering this generalization of the notion of probability space comes from quantum theory. Many contextual empirical models have been shown to agree with the statistical predictions of quantum mechanics concerning certain experimental designs \cite{A}. According to quantum mechanics there are physically realizable families of consistent distributions of random variables which cannot be obtained as the marginals of the joint distribution of all the variables.

The formalism of quantum theory has also been used to describe probabilities in macroscopic natural phenomena beyond quantum mechanics. For an extensive developement in this direction see \cite{Kr}.

We analyze here a well motivated mathematical model, other than quantum theory, leading also to contextual empirical models. This alternative framework is classical, in the sense that any physical instantiation does not require quantum phenomena. 
It involves the interactions among many stochastic processes. In particular, our construction can be seen as a vast generalization of Example \ref{example}.

The basic unit in our framework is an {\em open stochastic process}. It models a device that receives the values of a set of input variables and --depending on the values of a set of internal values--  generates the values of a set of output variables according to a probability distribution encoded in a stochastic matrix. A pictorical representation of an open stochastic process is given in Figure \ref{1}.

\begin{figure}[ht]
\centering
\includegraphics[scale=0.6]{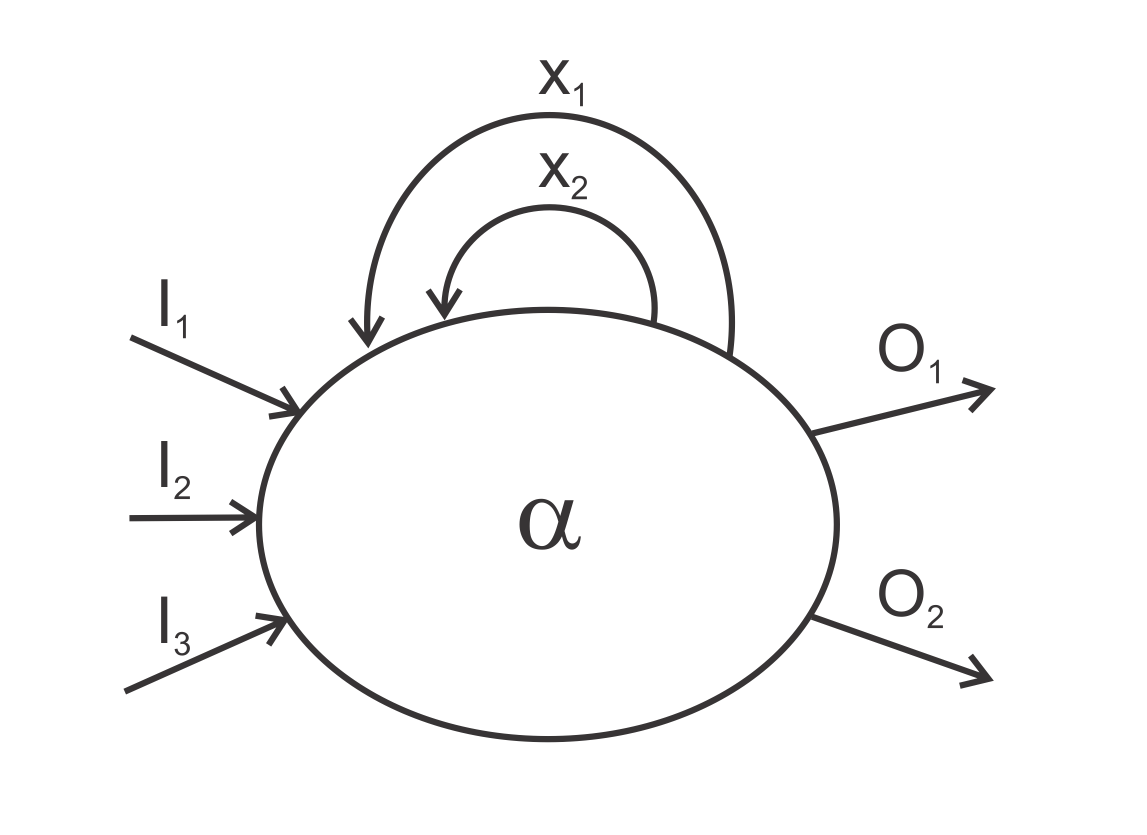}
\caption{ $\alpha$ represents an open stochastic process. Arrows $I_1,I_2,I_3$ represent the input variables, $X_1, X_2$ the internal variables and $O_1, O_2$ the output variables.}
\label{1}
\end{figure}

Open stochastic processes can be composed to build more complex stochastic processes connecting some of the output variables with some of the input variables of another process
, as  illustrated in Figure \ref{7}.

\begin{figure}[ht]
\centering
\includegraphics[scale=0.6]{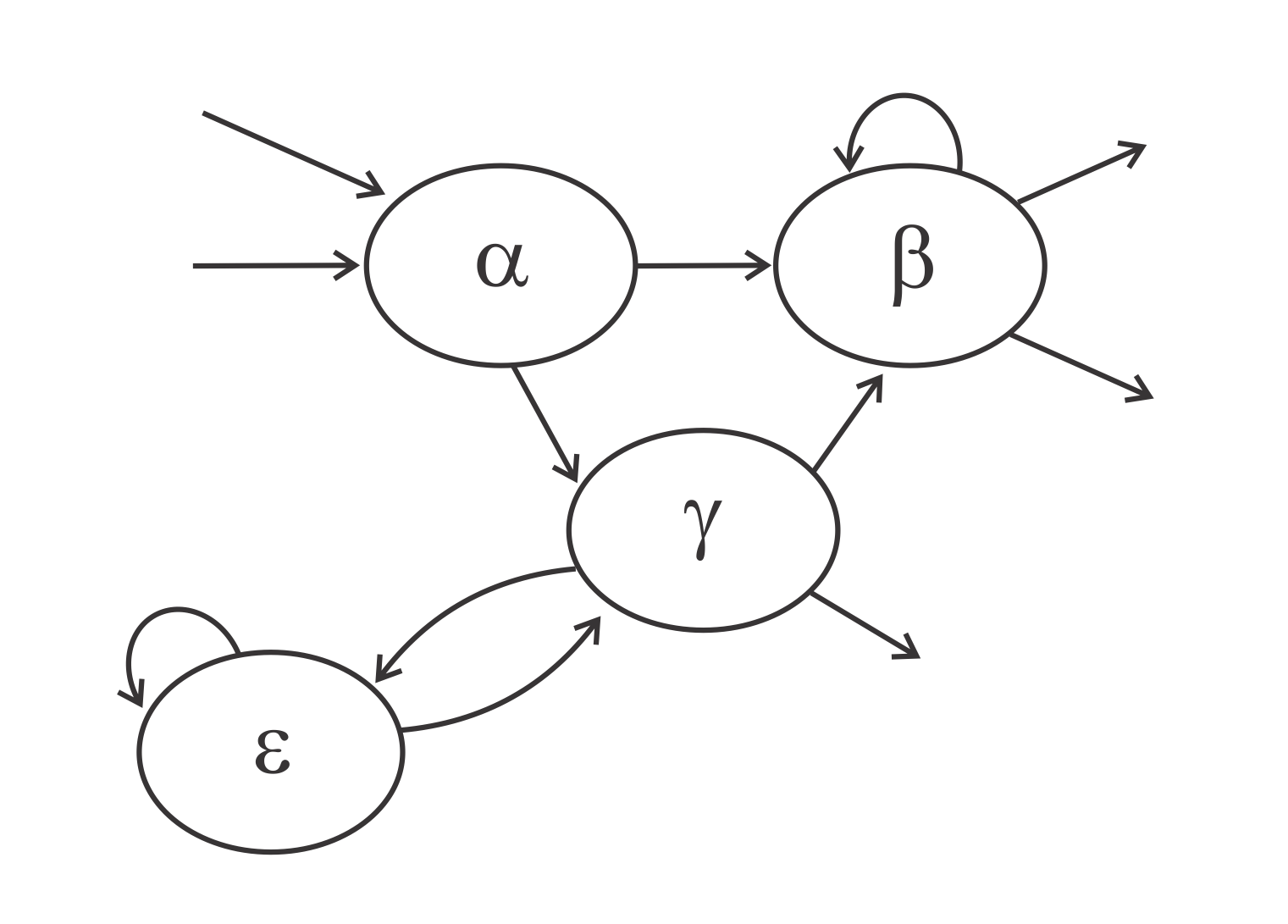}
\caption{A network of stochastic processes. Each arrow represents a variable. An output variable of one process can become an input variable of another one.}
\label{7}
\end{figure}

Under this representation our construction shows close resemblances to tensor networks and the operation of tensor contraction developed by the community of quantum computational complexity \cite{M-S} ,\cite{Biamonte}.
We can provide a graphical representation of the network of stochastic processes depicting each process as a node and each variable as an arrow.
The resulting network yields a single large process, as represented in Figure \ref{2}.

\begin{figure}[ht]
\centering
\includegraphics[scale=0.6]{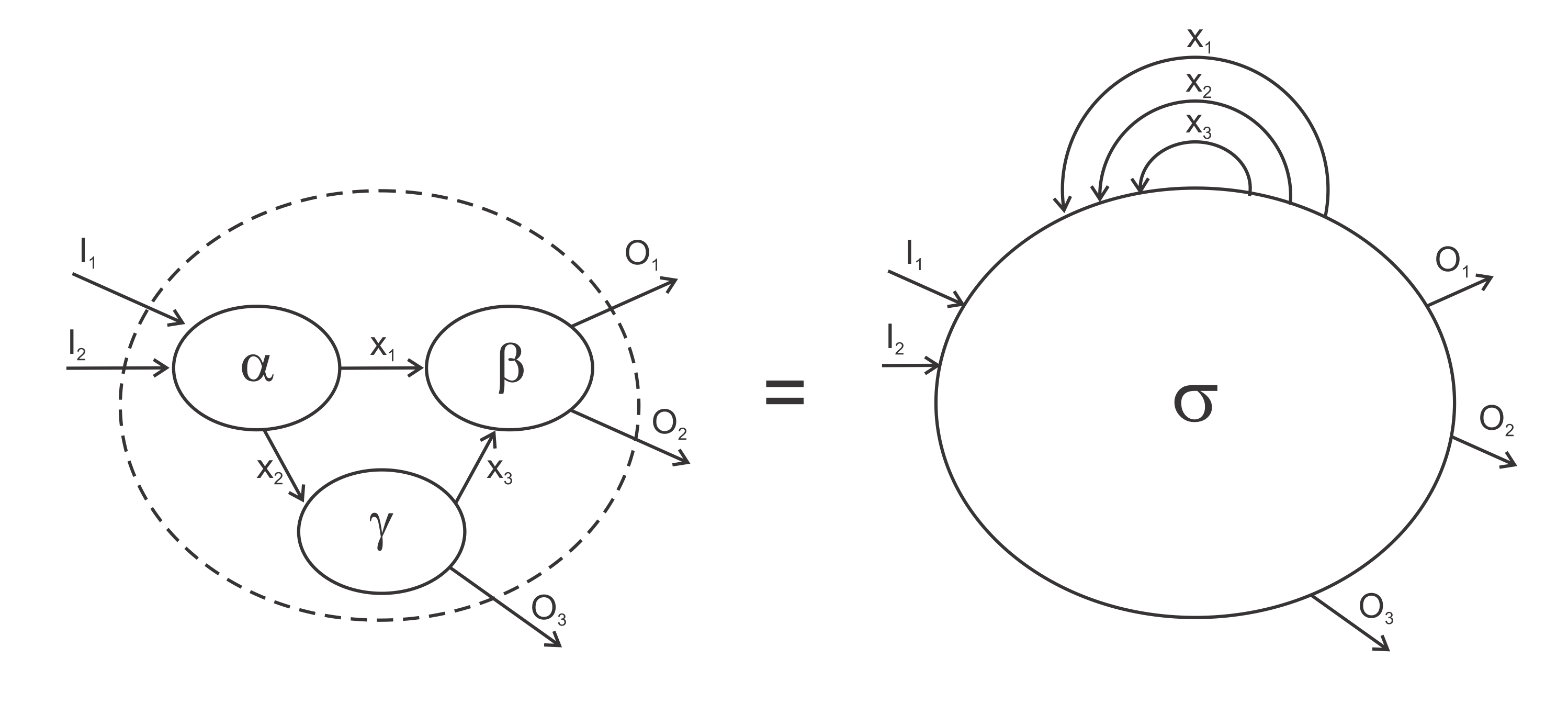}
\caption{A network composed by stochastic processes $\alpha, \beta, \gamma$ that yields a single open stochastic process $\sigma$. Any variable that is  simultaneously input of a node and output of another, becomes an internal variable of the resulting global process $\sigma$.}
\label{2}
\end{figure}

 We focus on the case in which this large process is closed, meaning that the resulting network has no input or output variables and thus all the variables are internal. In this sense, the large closed stochastic process can be seen as a large Markov chain.

We can analyze the statistical behavior of this Markov chain at the local level --i.e. at each node of the network--, namely the probabilistic features of the relation between the input and output variables of each individual open stochastic process. More specifically, we consider the probability $P^n_t(i,j)$ --as time goes to infinity-- of the event that at time $t$ the vector of input variables of node $n$ has value $i$ and at $t+1$ its vector of output variables gets value $j$. We obtain, in this way, a distribution of possible input-output values at each node. Our main result, Theorem \ref{Main result} indicates that this family of distributions constitutes an empirical model.

This paper is organized as follows. Section~\ref{empiricalsec} presents the basic elements in the description of empirical models. Section~\ref{netstoc0} describes the structure and behavior of networks of stochastic processes while Section~\ref{empstoc0} recasts this information in the framework of empirical models. Section~\ref{context} illustrates how a contextual empirical model can be defined upon a network of stochastic processes. 
Section~\ref{Comparison to quantum contextuality} shows that the contextuality arising from our model can be strictly stronger than that of quantum systems.
Section~\ref{conclusions} concludes the paper.

 \section{Empirical models}~\label{empiricalsec}

The notion of empirical model  was defined in \cite{A-B} as a formal framework in which the weird predictions of quantum mechanics can be made sense. A previous concept, useful for the definition of empirical models, is that of {\em measurement scenario} consisting of a finite set $\mathcal{X}$, a finite set $\mathcal{O}_X$ for each $X \in \mathcal{X}$, and a family $\mathcal{M}$ of subsets of $\mathcal{X}$ that covers $\mathcal{X}$ and such that if $\mathcal{U}' \subseteq \mathcal{U}$ and $\mathcal{U} \in \mathcal{M}$ then $\mathcal{U}' \in \mathcal{M}$.

The elements of $\mathcal{X}$ are called the {\em observables} or {\em variables}. When an observable $X$ is measured, an {\em outcome} from the set $\mathcal{O}_X$ is obtained. Each subset $\mathcal{U}$ in the family $\mathcal{M}$ is called a {\em context} and it represents a subset of observables that can be jointly measured. We say that $\mathcal{U}$ is a {\em maximal context} if it is not properly contained in another context.

Let $\mathcal{U}$ be a subset of $\mathcal{X}$. A section $s$ over $\mathcal{U}$ is an element of the cartesian product
$$
\mathcal{E}(\mathcal{U}) = \prod_{X \in \mathcal{U}} \mathcal{O}_X
$$
A {\em section} over $\mathcal{U}$ represents the outcome of the joint measurement of the variables in $\mathcal{U}$.
If $\mathcal{U}' \subseteq \mathcal{U}$ there exists a natural restriction map  $|_{\mathcal{U}'} : \mathcal{E}(\mathcal{U})\rightarrow \mathcal{E}(\mathcal{U}')$
$$
s \rightarrow s|_{\mathcal{U}'}
$$
Let us represent a {\em probability distribution over} $\mathcal{U}$ as a formal linear combination
$$
\pi= \sum_{s \in \mathcal{E}(U)} \pi_s \ s
$$
\noindent such that $\pi_s$ is a non-negative real number  for each $s$  and $\sum_{s \in \mathcal{E}(\mathcal{U})} \pi_s =1$. The number $\pi_s$ represents the probability of obtaining the outcome $s$ when the variables in $\mathcal{U}$ are measured. Let $\mathcal{D}(\mathcal{U})$ be the convex set of all probability distributions over $\mathcal{U}$.

The restriction map $|_{\mathcal{U}'} : \mathcal{E}(\mathcal{U})\rightarrow \mathcal{E}(\mathcal{U}')$ induces a map $|_{\mathcal{U}'} : \mathcal{D}(\mathcal{U})\rightarrow \mathcal{D}(\mathcal{U}')$ by linear extension. Namely, if
$
\pi= \sum_{s \in \mathcal{E}(\mathcal{\mathcal{U}})} \pi_s s
$
\noindent then
\begin{eqnarray}
\pi|_{\mathcal{U}'}\  = \sum_{s \in \mathcal{E}(\mathcal{U})} \pi_s \ s|_{\mathcal{U}'}
      \ =  \sum_{s' \in \mathcal{E}(\mathcal{U}')} \left(\sum_{ s|_{\mathcal{U}'}=s'} \pi_s \right) \ s'
\end{eqnarray}
In other words, if $\pi$ is a probability distribution over a set $\mathcal{U}$ of random variables and $\mathcal{U}'$ is a subset of $\mathcal{U}$, then $\pi|_{\mathcal{U}'}$ is the {\em marginal distribution} of $\pi$ when restricted to the variables in $\mathcal{U}'$.

Let $\mathcal{P}$ be a correspondence that assigns a probability distribution $\pi_\mathcal{U} \in  \mathcal{D}(\mathcal{U})$ to each context $\mathcal{U} \in \mathcal{M}$.
We say that the correspondence  $\mathcal{P}$ is a
 {\em no-signalling
  empirical model} for $\mathcal{M}$  if for every $\mathcal{U}' \subseteq \mathcal{U}$ with $\mathcal{U} \in \mathcal{M}$  we have the compatibility condition
$$
\pi_\mathcal{U} |_{\mathcal{U}'} = \pi_{\mathcal{U}'}
$$
For short, we refer to no-signalling empirical models simply as empirical models.
We say that a given empirical model is {\em non-contextual} if there exist a probability distribution $\pi_\mathcal{X}$ over $\mathcal{X}$ --the set of all the observables--  such that  for all $\mathcal{U} \in \mathcal{M}$
$$
\pi_\mathcal{U} = \pi_\mathcal{X} |_{\mathcal{U}}
$$
If such a global distribution fails to exist we say that the empirical model is {\em contextual}. As shown in Example \ref{example}, contextual empirical models can be easily constructed, by just postulating adequate distributions over subsets of variables. But random mechanisms generating this kind of behaviors are not pervasive. As noted above, quantum mechanics is a source of instances, but classical systems supporting them are much more rare. In the next section we present one, a network of stochastic processes.

\section{Networks of stochastic processes}~\label{netstoc0}

The representation of quantum processes by means of non-contextual empirical models relies on the properties of non-classical correlations among particles. A classical analogy involves the interaction among different random-generating subsystems. We will show how the system obtained through those interactions, represented as a network of stochastic processes, can be seen as an empirical model.

\subsection{Stochastic processes}
Intuitively, a process is a device or agent that receives as input the values of a set of variables and --depending on these values and its internal state--  generates the values of another set of output variables. If this process is not deterministic and follow a probability distribution, we say that it constitutes a stochastic process.

More precisely, a {\em stochastic process} consists of a set of input variables $I_1,...,I_n$, a set of internal variables $X_1,...,X_m$ and a set of output variables $O_1,...,O_r$, related by a stochastic matrix $\sigma$. We denote the entries of this matrix by
$$
\sigma
\left(
\begin{matrix}
I_1  & ... & I_n  &  X_1 & ... & X_m      \\
i'_1 & \dots & i'_n   &  x'_1 & \dots & x'_m
\end{matrix} \  \
\middle\vert \  \
\begin{matrix}
X_1 & \dots & X_m   & O_1 & \dots & O_r      \\
x_1 & \dots & x_m    & o_1 & \dots & o_r
\end{matrix}
\right)
$$
\noindent where this number represents the probability that, if at time $t$ we have $I_1=i_1,...,I_n=i_n$ and
$X_1=x'_1,...,X_m=x'_m$, then at time $t+1$  we have $X_1=x_1,...,X_m=x_m$ and $O_1=o_1,...,O_r=o_r$.
Note that the rows of the stochastic matrix $\sigma$ are indexed by tuples of input and internal variables, $(i'_1,...,i'_n,x'_1,...,x'_m)$, while the columns by tuples of internal and output variables, $(x_1,...,x_m,o_1,...,o_r)$.

This notation allows to distinguish easily the internal variables as those which appear simultaneously at both sides of the vertical line. The input variables, in turn, appear only at the left side of the vertical line but not at the right. The output variables, in turn, only appear at the right side of the vertical line but not at the left.

\subsection{Composition of stochastic processes}
We say that a stochastic process is {\em closed} if all its variables are internal. Otherwise, we say it is {\em open}.
The importance of open processes is that they can be combined to yield new processes by connecting output variables of some process to input variables of another. Let us see how this {\em composition} works in an example, which is illustrated in Figure \ref{3}.

\begin{figure}[ht]
\centering
\includegraphics[scale=0.6]{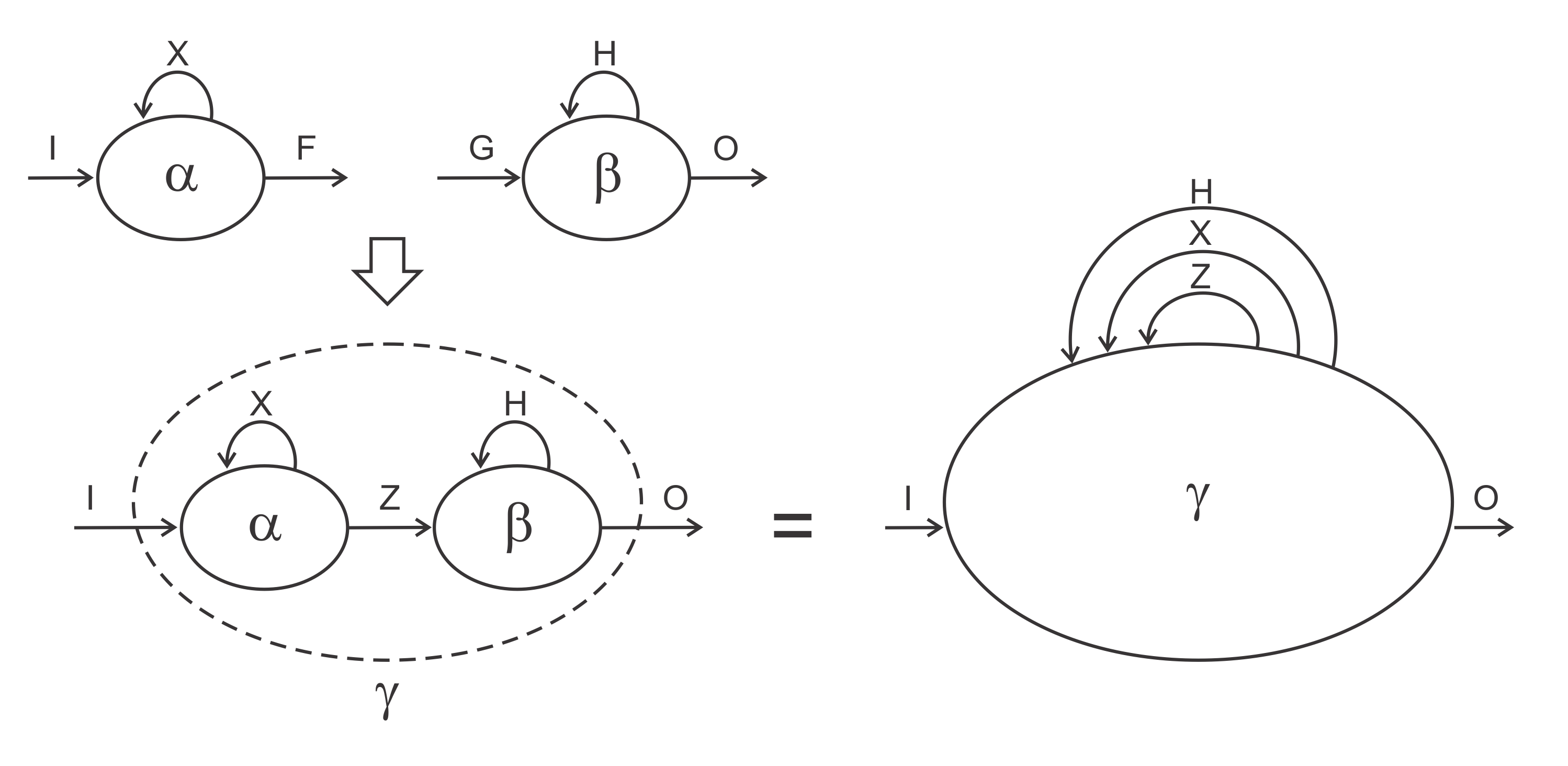}
\caption{Processes $\alpha$ and $\beta$ are composed by connecting the output variable $F$ of $\alpha$ with the input variable $G$ of $\beta$. The resulting process $\gamma$ has one input variable $I$, one output variable $O$ and three internal variables $X$, $Z$ and $H$.}
\label{3}
\end{figure}

Let $\alpha$ and $\beta$ be two open stochastic processes given by
$$
\alpha
\left(
\begin{matrix}
I &  X    \\
i' & x'
\end{matrix} \  \
\middle\vert \  \
\begin{matrix}
X &  F    \\
x & f
\end{matrix}
\right)
\ \ \ \ \mbox{and} \ \ \ \
\beta
\left(
\begin{matrix}
G &  H    \\
g' & h'
\end{matrix} \  \
\middle\vert \  \
\begin{matrix}
H &  O    \\
h & o
\end{matrix}
\right)
$$
\noindent respectively. If we connect the output variable $F$ of $\alpha$ to the input variable $G$ of $\beta$ we obtain the stochastic process $\gamma$ whose coefficients are given by the product of coefficients of the $\alpha$ and $\beta$ matrices:
$$
\gamma
\left(
\begin{matrix}
I &  X  & Z & H  \\
i' & x' & z' & h'
\end{matrix} \  \
\middle\vert \  \
\begin{matrix}
X &  Z   & H & O \\
x & z & h & o
\end{matrix}
\right)
\  = \
\alpha
\left(
\begin{matrix}
I &  X    \\
i' & x'
\end{matrix} \  \
\middle\vert \  \
\begin{matrix}
X &  Z    \\
x & z
\end{matrix}
\right)
\
\beta
\left(
\begin{matrix}
Z &  H    \\
z' & h'
\end{matrix} \  \
\middle\vert \  \
\begin{matrix}
H &  O    \\
h & o
\end{matrix}
\right)
$$

A key feature in this composition is that the output variable $F$ of $\alpha$ and the input variable $G$ of $\beta$ become jointly a single internal variable, $Z$, of $\gamma$.

\subsection{The network of processes}

Given two stochastic processes $\alpha$ and $\beta$ we say that $\alpha$ {\em provides} $\beta$ if there exists a variable $X$ which is at the same time an output variable of $\alpha$ and an input variable of $\beta$. A {\em reciprocity} is a pair $(\alpha, \beta)$ of stochastic processes such that $\alpha$ provides $\beta$ and $\beta$ provides $\alpha$ (see Figure \ref{5}).
In particular, $(\alpha, \alpha)$ is a reciprocity if and only if $\alpha$ has at least one internal variable.

\begin{figure}[ht]
\centering
\includegraphics[scale=0.6]{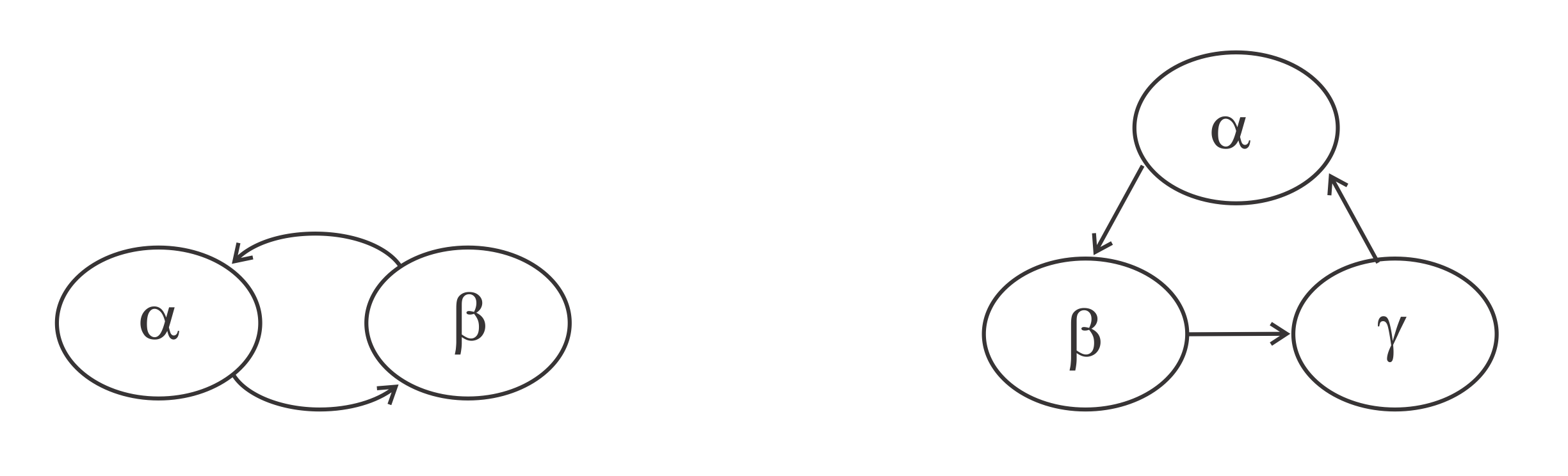}
\caption{Left: processes $\alpha$ and $\beta$ form a reciprocity. Right: a network without reciprocities.}
\label{5}
\end{figure}

We define a {\em network} $\mathcal{N}$ as a family $\{\sigma_i\}_{i=1,...,n}$ of stochastic processes yielding a stochastic process $\sigma$ obtained through the composition of members of $\{\sigma_i\}_{i=1,...,n}$. The elements of this family are called the {\em nodes} of the network and the variables of $\sigma$ are called the {\em arrows}. $\sigma$ is called the {\em global process}.
We say that $\mathcal{N}$ is a {\em network without reciprocities} if no pair $(\sigma_i,\sigma_j)$, for $i \neq j$, is a reciprocity (see Figure \ref{5}). We say that a network is {\em closed} if all the variables of the global process $\sigma$ are internal (see Figure \ref{4}).

\begin{figure}[ht]
\centering
\includegraphics[scale=0.6]{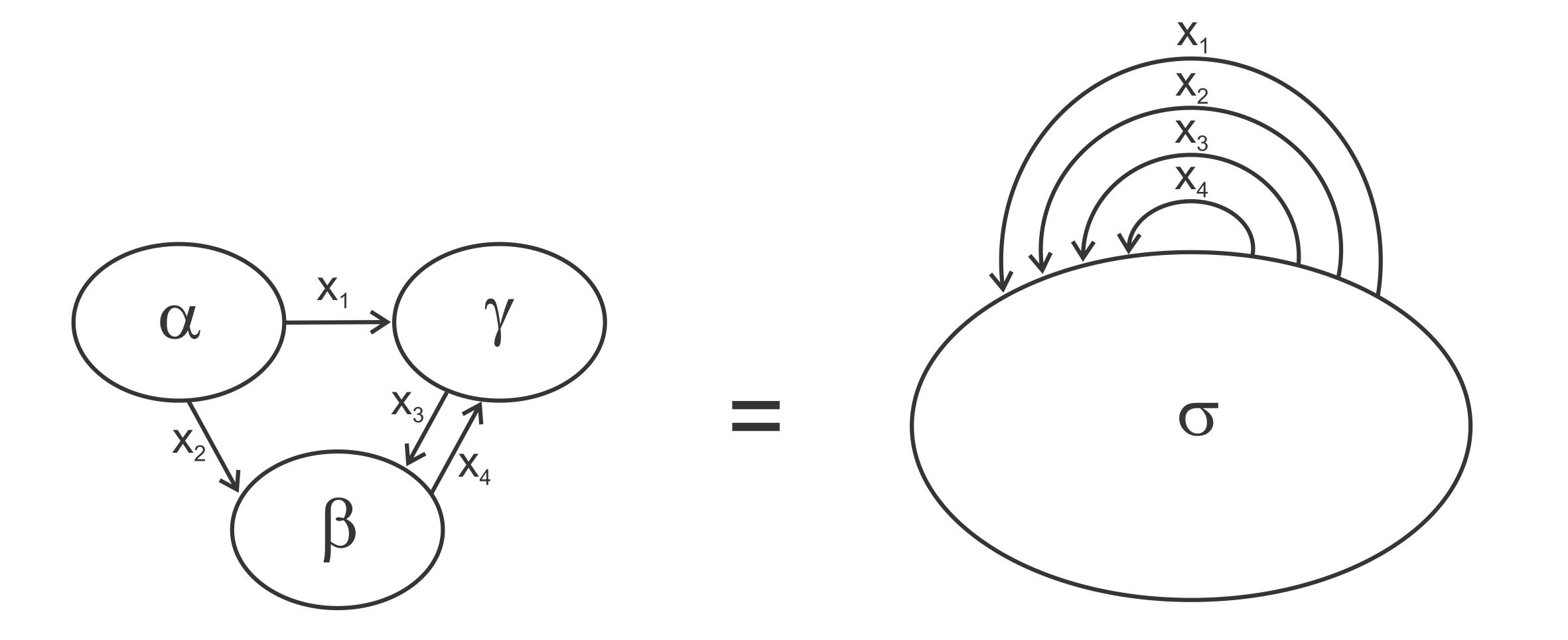}
\caption{A closed network. All the variables become internal variables of the global process $\sigma$.}
\label{4}
\end{figure}

It is useful to depict a network as an oriented  graph with multiple arrows where each node represents a process $\sigma_i$ and each arrow represents a variable. The input variables of the process $\sigma_i$ are represented by the incoming arrows of the node while its output variables are seen as arrows leaving the node. The input variables of the global process $\sigma$ are represented by arrows with no initial node. In turn, the output variables of $\sigma$ are arrows without terminal node. The internal variables of $\sigma$ are represented by those arrows connecting two nodes of  $\mathcal{N}$.

\subsection{Dynamics of closed stochastic processes}

Every closed stochastic process gives rise to a Markov chain. Let $\sigma$ be a closed stochastic process with (internal) variables $X_1,...,X_n$.  Let
$$
\pi_t
\left(
\begin{matrix}
  X_1 & \dots & X_n     \\
  x_1 & \dots & x_n
\end{matrix}
\right)
$$
\noindent be the probability that at time $t$ we have $X_1=x_1,...,X_n=x_n$. Then the evolution of this probability distribution is determined by
$$
\pi_{t+1}
\left(
\begin{matrix}
  X_1 & \dots & X_n     \\
  x_1 & \dots & x_n
\end{matrix}
\right)
= \sum_{x'_1...x'_n}
\sigma
\left(
\begin{matrix}
X_1 &  \dots & X_n    \\
x'_1 &  \dots & x'_n
\end{matrix} \  \
\middle\vert \  \
\begin{matrix}
X_1 &  \dots & X_n    \\
x_1 &  \dots & x_n
\end{matrix}
\right)
\
\pi_t
\left(
\begin{matrix}
  X_1 & \dots & X_n     \\
  x'_1 & \dots & x'_n
\end{matrix}
\right)
$$
For short, we write this equation as
$$
\pi_{t+1}
\left(
\begin{matrix}
  X    \\
  x
\end{matrix}
\right)
= \sum_{x'}
\sigma
\left(
\begin{matrix}
X    \\
x'
\end{matrix} \  \
\middle\vert \  \
\begin{matrix}
X    \\
x
\end{matrix}
\right)
\
\pi_t
\left(
\begin{matrix}
 X    \\
x'
\end{matrix}
\right)
$$

If we consider the stochastic matrix $\sigma$ as a linear transformation, we see that it transforms the convex set of all probability distributions into itself. Then, by the Brouwer's fixed point theorem\footnote{{\it ``For any continuous mapping $f: \Delta \rightarrow \Delta$, with $\Delta$ a compact and convex set, there exists $\bar{x} \in \Delta$ such that $f(\bar{x}) = \bar{x}$.''} In our case the mapping is given by $\sigma$, and $\Delta$ is the compact and convex set of probability distributions on the internal variables.}, there exists at least one probability distribution fixed by $\sigma$. Such a distribution is called a {\em stationary distribution}.\footnote{According to the Convergence Theorem of finite Markov chains (Theorem 4.9 of \cite{Levin}), the {\em irreducibility} and {\em aperiodicity} of  $\sigma$ are sufficient conditions for the uniqueness of a stationary distribution $\omega$, which is obtained as the limit
$$
\omega = \lim_{t\to\infty} \pi_t
$$
}

\section{The empirical model associated to a network}~\label{empstoc0}
From now on, let $\mathcal{N}$ be a closed network without reciprocities with a global process $\sigma$.
Given a node $\beta$ of the network $\mathcal{N}$, let $\mathcal{I}_{\beta}=\{I_1,...,I_n\}$ and $\mathcal{O}_{\beta}=\{O_1,...,O_m\}$  be respectively the set of input and output variables of $\beta$. Note that $\beta$ has no internal variables since $\mathcal{N}$ lacks reciprocities.  We will define a probability distribution  $\delta_{\beta}$ over $\mathcal{I}_{\beta} \cup \mathcal{O}_{\beta}$.
If the global process $\sigma$ has a unique stationary distribution, the coefficient
$$
\delta_{\beta}
\left(
\begin{matrix}
  I_1 & ... & I_n   &  O_1 & ... & O_m  \\
  i_1 & ... & i_n   &   o_1 & ... & o_m
\end{matrix}
\right)
$$
\noindent will be interpreted as the limit --when $t\to\infty$-- of the probability that at time $t$ we have $I_1=i_1,..., I_n=i_n$ and at time $t+1$ we have $O_1=o_1,..., O_m=o_m$ .

This distribution does not depend on the uniqueness of the stationary distribution $\omega$ of the global process $\sigma$.
Let $\omega|_{\mathcal{I}_{\beta}}$ be the marginal of $\omega$ over the subset of input variables $\mathcal{I}_{\beta}$. The distribution $\delta_{\beta}$ is defined by
$$
\delta_{\beta}
\left(
\begin{matrix}
  I_1 & \dots & O_m     \\
  i_1 & \dots & o_m
\end{matrix}
\right)
=
\beta
\left(
\begin{matrix}
I_1 &  \dots & I_n    \\
i_1 &  \dots & i_n
\end{matrix} \  \
\middle\vert \  \
\begin{matrix}
O_1 &  \dots & O_m    \\
o_1 &  \dots & o_m
\end{matrix}
\right)
\
\omega|_{\mathcal{I}_{\beta}}
\left(
\begin{matrix}
  I_1 & \dots & I_n     \\
  i_1 & \dots & i_n
\end{matrix}
\right)
$$

\begin{theorem}
\label{first theorem}
Let $\mathcal{N}$ be a closed network without reciprocities. Let $\omega$ be a stationary distribution of the entire process of  $\mathcal{N}$. Let $\beta$ be a node of  $\mathcal{N}$ and let $\mathcal{I}_{\beta}$, $\mathcal{O}_{\beta}$ and $\delta_{\beta}$ be defined as above.
Then
the marginal distributions of $\delta_{\beta}$ over $\mathcal{I}_{\beta}$ and $\mathcal{O}_{\beta}$ respectively are given by
$$
\delta_{\beta}|_{\mathcal{I}_{\beta}} = \omega|_{\mathcal{I}_{\beta}}  \ \ \ \ \ \ \ \ \ \ \ \
\delta_{\beta}|_{\mathcal{O}_{\beta}} = \omega|_{\mathcal{O}_{\beta}}
$$
\end{theorem}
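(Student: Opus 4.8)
The plan is to prove the two claimed identities separately; the identity $\delta_{\beta}|_{\mathcal{I}_{\beta}} = \omega|_{\mathcal{I}_{\beta}}$ is immediate, and $\delta_{\beta}|_{\mathcal{O}_{\beta}} = \omega|_{\mathcal{O}_{\beta}}$ is where the work lies. For the first, I would marginalize the defining formula of $\delta_{\beta}$ over the output variables: since the factor $\omega|_{\mathcal{I}_{\beta}}(i_1,\dots,i_n)$ does not depend on $(o_1,\dots,o_m)$, it pulls out of the sum, leaving $\sum_{o_1,\dots,o_m}\beta\big(\mathcal{I}_{\beta},i_1\cdots i_n\mid \mathcal{O}_{\beta},o_1\cdots o_m\big)$. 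Because $\mathcal{N}$ has no reciprocities, $\beta$ has no internal variables, so the rows of the stochastic matrix $\beta$ are indexed by input tuples and its columns by output tuples; hence each such row sums to $1$ and the whole expression equals $\omega|_{\mathcal{I}_{\beta}}(i_1,\dots,i_n)$.

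For $\delta_{\beta}|_{\mathcal{O}_{\beta}} = \omega|_{\mathcal{O}_{\beta}}$ I would use stationarity together with the factorization of the global matrix $\sigma$ as a product over the nodes. First I would record that, by construction of $\sigma$ as a composition of its nodes, $\sigma\big(\,\mathcal{X},x' \mid \mathcal{X},x\,\big) = \prod_{\gamma}\gamma\big(\mathcal{I}_\gamma, x'|_{\mathcal{I}_\gamma}\mid \mathcal{O}_\gamma, x|_{\mathcal{O}_\gamma}\big)$, where $\gamma$ ranges over the nodes, $x'$ is the tuple of all variable values at time $t$ and $x$ the tuple at time $t+1$; here one uses that $\mathcal{N}$ is closed and reciprocity-free, so every variable of $\sigma$ is an input of exactly one node and an output of exactly one node, i.e. the families $\{\mathcal{I}_\gamma\}_\gamma$ and $\{\mathcal{O}_\gamma\}_\gamma$ are each partitions of $\mathcal{X}$. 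Then, starting from $\omega=\sigma\,\omega$, I would sum the stationarity equation over all global outcomes $x$ with $x|_{\mathcal{O}_{\beta}}=o$ for a fixed $o=(o_1,\dots,o_m)$, and exchange the (finite) sums to get $\omega|_{\mathcal{O}_{\beta}}(o)=\sum_{x'}\omega(x')\sum_{x\,:\,x|_{\mathcal{O}_{\beta}}=o}\prod_{\gamma}\gamma(\cdots)$. Since $\{\mathcal{O}_\gamma\}_\gamma$ partitions $\mathcal{X}$, the inner sum over $x$ factors across nodes: the $\beta$-factor is frozen at $o$, while for each $\gamma\neq\beta$ the sum of $\gamma$ over its output tuple equals $1$ (each row of $\gamma$ sums to $1$, again using that $\gamma$ has no internal variables). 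Hence the inner sum collapses to $\beta\big(\mathcal{I}_{\beta}, x'|_{\mathcal{I}_{\beta}}\mid \mathcal{O}_{\beta}, o\big)$, and regrouping the outer sum over $x'$ by the value $i=x'|_{\mathcal{I}_{\beta}}$ yields $\sum_{i}\beta(i\mid o)\,\omega|_{\mathcal{I}_{\beta}}(i)$, which is exactly $\delta_{\beta}|_{\mathcal{O}_{\beta}}(o)$.

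I expect the main obstacle to be bookkeeping rather than anything conceptual: writing the global transition matrix as a clean product over nodes and keeping straight which coordinates are time-$t$ inputs and which are time-$(t+1)$ outputs, so that the two partition properties of $\{\mathcal{I}_\gamma\}_\gamma$ and $\{\mathcal{O}_\gamma\}_\gamma$ can be invoked to split the marginalization into a product of per-node sums. Once that structure is set up, the only ingredients used are that $\omega$ is $\sigma$-stationary and that each node matrix is row-stochastic with rows indexed by its inputs and columns by its outputs — the latter being precisely what the absence of reciprocities secures, and precisely what fails in the general (self-reciprocal) case.
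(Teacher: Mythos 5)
Your proof is correct and follows essentially the same route as the paper's: the first identity from row-stochasticity of $\beta$ (no internal variables, so the sum over outputs is $1$), and the second by combining stationarity of $\omega$ with the factorization of the global matrix $\sigma$ over the nodes so that every non-$\beta$ factor marginalizes to $1$. The only cosmetic differences are that the paper lumps the complement of $\beta$ into a single process $\theta$ rather than keeping the full product over nodes, and runs its chain of equalities from $\delta_{\beta}|_{\mathcal{O}_{\beta}}$ toward $\omega|_{\mathcal{O}_{\beta}}$ (inserting $\sum \theta = 1$) instead of the reverse direction you take.
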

\begin{proof}
Let $I$ be the joint of the variables $I_1,...,I_n$ and let $O$ be the joint of the variables $O_1,...,O_m$.
The first equality is obtained using the fact that node $\beta$ can be identified with its stochastic matrix:
\begin{align}
\delta_{\beta}|_{\mathcal{I}_{\beta}}
\left(
\begin{matrix}
  I   \\
  i
\end{matrix}
\right)
&=
\sum_{o}
\delta_{\beta}
\left(
\begin{matrix}
  I &O   \\
  i  & o
\end{matrix}
\right) \\
&=
\sum_{o}
\beta
\left(
\begin{matrix}
 I   \\
i
\end{matrix} \  \
\middle\vert \  \
\begin{matrix}
O    \\
o
\end{matrix}
\right)
\
\omega|_{\mathcal{I}_{\beta}}
\left(
\begin{matrix}
  I     \\
  i
  \end{matrix}
\right)  \\
&=
\omega|_{\mathcal{I}_{\beta}}
\left(
\begin{matrix}
  I     \\
  i
  \end{matrix}
\right)
\end{align}

For the second equality,
let $R$ be the joint of the variables of the global process $\sigma$ which are not contained in $\mathcal{I}_{\beta} \cup \mathcal{O}_{\beta}$. Let $\theta$ be the process obtained by restricting the network $\mathcal{N}$ to the nodes other than $\beta$ (see Figure \ref{6}).

\begin{figure}[ht]
\centering
\includegraphics[scale=0.6]{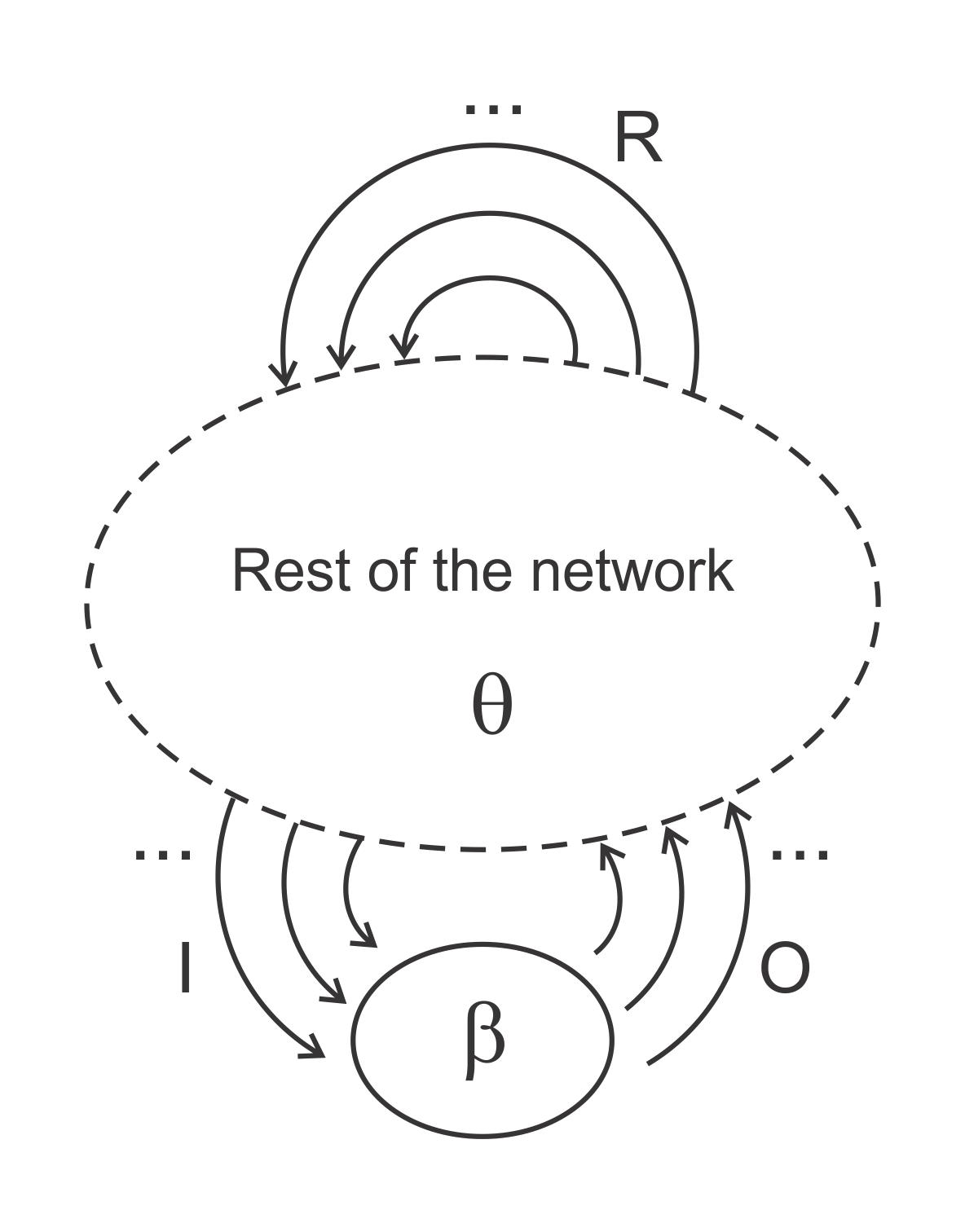}
\caption{$I$ is the joint of all the input variables of $\beta$, while $O$ is the joint of all its output variables. $R$ is the joint of the rest of the variables in network $\mathcal{N}$. Process $\theta$ is the restriction of $\mathcal{N}$ to the nodes that are not $\beta$.}
\label{6}
\end{figure}

This means that the entire process $\sigma$ is factorized as
$$
\sigma
\left(
\begin{matrix}
 I   & R & O\\
i'  & r' & o'
\end{matrix} \  \
\middle\vert \  \
\begin{matrix}
I & R & O    \\
i & r & o
\end{matrix}
\right) \
= \
\beta
\left(
\begin{matrix}
 I   \\
i'
\end{matrix} \  \
\middle\vert \  \
\begin{matrix}
O    \\
o
\end{matrix}
\right)
\
\
\theta
\left(
\begin{matrix}
 O & R  \\
o' & r'
\end{matrix} \  \
\middle\vert \  \
\begin{matrix}
R & I    \\
r & i
\end{matrix}
\right)
$$
 We will use the fact that $\omega$ is a stationary distribution of the process $\sigma$:
$$
\omega
\left(
\begin{matrix}
  I   & R & O  \\
  i   & r  & o
  \end{matrix}
\right)
\ =\
\sum_{i',r',o'}
\sigma
\left(
\begin{matrix}
 I & R & O \\
i' & r' & o'
\end{matrix} \  \
\middle\vert \  \
\begin{matrix}
I & R & O   \\
i & r & o
\end{matrix}
\right)
\
\omega
\left(
\begin{matrix}
  I   & R & O  \\
  i'   & r'   & o'
  \end{matrix}
\right)
$$
Then
\begin{align}
\delta_{\beta}|_{\mathcal{O}_{\beta}}
\left(
\begin{matrix}
  O  \\
  o
\end{matrix}
\right)
&=
\sum_{i'}
\delta_{\beta}
\left(
\begin{matrix}
  I &O   \\
  i'  & o
\end{matrix}
\right) \\
&=
\sum_{i'}
\beta
\left(
\begin{matrix}
 I   \\
i'
\end{matrix} \  \
\middle\vert \  \
\begin{matrix}
O    \\
o
\end{matrix}
\right)
\
\omega|_{\mathcal{I}}
\left(
\begin{matrix}
  I     \\
  i'
  \end{matrix}
\right)  \\
&=
\sum_{i'}
\beta
\left(
\begin{matrix}
 I   \\
i'
\end{matrix} \  \
\middle\vert \  \
\begin{matrix}
O    \\
o
\end{matrix}
\right)
\
\sum_{r',o'}
\omega
\left(
\begin{matrix}
  I   & R & O  \\
  i'   & r'   & o'
  \end{matrix}
\right)  \\
&=
\sum_{i'}
\beta
\left(
\begin{matrix}
 I   \\
i'
\end{matrix} \  \
\middle\vert \  \
\begin{matrix}
O    \\
o
\end{matrix}
\right)
\
\sum_{r',o'}
\sum_{i ,r}
\theta
\left(
\begin{matrix}
 O & R  \\
o' & r'
\end{matrix} \  \
\middle\vert \  \
\begin{matrix}
R & I    \\
r & i
\end{matrix}
\right)
\omega
\left(
\begin{matrix}
  I   & R & O  \\
  i'   & r'   & o'
  \end{matrix}
\right)  \\
&=
\sum_{i, r}
\sum_{i', r' ,o'}
\beta
\left(
\begin{matrix}
 I   \\
i'
\end{matrix} \  \
\middle\vert \  \
\begin{matrix}
O    \\
o
\end{matrix}
\right)
\
\theta
\left(
\begin{matrix}
 O & R  \\
o' & r'
\end{matrix} \  \
\middle\vert \  \
\begin{matrix}
R & I    \\
r & i
\end{matrix}
\right)
\omega
\left(
\begin{matrix}
  I   & R & O  \\
  i'   & r'   & o'
  \end{matrix}
\right)  \\
&=
\sum_{i,r}
\sum_{i',r',o'}
\sigma
\left(
\begin{matrix}
 I & R & O \\
i' & r' & o'
\end{matrix} \  \
\middle\vert \  \
\begin{matrix}
I & R & O   \\
i & r & o
\end{matrix}
\right)
\
\omega
\left(
\begin{matrix}
  I   & R & O  \\
  i'   & r'   & o'
  \end{matrix}
\right)
\\ &=
\sum_{i, r}
\omega
\left(
\begin{matrix}
  I   & R & O  \\
  i  & r  & o
  \end{matrix}
\right)  \\
&=
\omega|_{\mathcal{O}_{\beta}}
\left(
\begin{matrix}
   O  \\
  o
  \end{matrix}
\right)
\end{align}

\end{proof}

Given a network  without reciprocities $\mathcal{N}$, we attach to this network a measurement scenario as follows. The set of variables of this measurement scenario is the set of all the variables of the global  process $\sigma$. To each node $\beta$ in the network $\mathcal{N}$ corresponds a maximal context, $\mathcal{I}_{\beta} \cup \mathcal{O}_{\beta}$, which we denote by $\mathcal{U}_{\beta}$. They exhaust the set of maximal contexts of the measurement scenario.

\begin{theorem}
\label{Main result}
Let $\mathcal{N}$ be a closed network  without reciprocities with global process $\sigma$. Let $\omega$ be a stationary distribution of $\sigma$.
Let $\mathcal{P}$ be the correspondence that --for each node $\beta$ in the network $\mathcal{N}$-- assigns the distribution $\delta_{\beta}$ to the maximal context $\mathcal{U}_{\beta}$. Then $\mathcal{P}$ is a no-signalling empirical model.
\end{theorem}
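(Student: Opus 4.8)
The plan is to verify the single defining condition of a no-signalling empirical model: for every pair of contexts $\mathcal{U}' \subseteq \mathcal{U}$ with $\mathcal{U}$ a maximal context $\mathcal{U}_\beta$, the distribution assigned to $\mathcal{U}$ restricts to the distribution assigned to $\mathcal{U}'$. Since the maximal contexts are exactly the sets $\mathcal{U}_\beta = \mathcal{I}_\beta \cup \mathcal{O}_\beta$, and the measurement scenario is downward closed, it suffices to check consistency of the $\delta_\beta$ on overlaps of maximal contexts. Concretely, I would first observe that for $\mathcal{U}' \subseteq \mathcal{U}_\beta$ the assigned distribution on $\mathcal{U}'$ is by construction $\delta_\beta|_{\mathcal{U}'}$ (the scenario's distributions on non-maximal contexts being defined as these marginals), so internal consistency within a single context is automatic. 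The real content is that the definitions agree on the intersection $\mathcal{U}_\beta \cap \mathcal{U}_\gamma$ of two distinct maximal contexts.

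The key structural fact is that, because $\mathcal{N}$ has no reciprocities, two distinct nodes $\beta$ and $\gamma$ cannot share an internal arrow in a cyclic way: an arrow shared by $\mathcal{U}_\beta$ and $\mathcal{U}_\gamma$ is a variable that is output of one node and input of the other, say output of $\gamma$ and input of $\beta$ (it cannot be input of both, nor output of both, since each arrow of the global closed process has exactly one source node and one target node). Hence $\mathcal{U}_\beta \cap \mathcal{U}_\gamma \subseteq \mathcal{O}_\gamma \cap \mathcal{I}_\beta$, and more precisely this intersection is a subset $S$ of arrows each of which lies in $\mathcal{I}_\beta$ and in $\mathcal{O}_\gamma$. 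By Theorem~\ref{first theorem}, $\delta_\beta|_{\mathcal{I}_\beta} = \omega|_{\mathcal{I}_\beta}$ and $\delta_\gamma|_{\mathcal{O}_\gamma} = \omega|_{\mathcal{O}_\gamma}$. Restricting further to $S$ and using that marginalization is transitive ($(\pi|_{\mathcal{A}})|_{\mathcal{B}} = \pi|_{\mathcal{B}}$ for $\mathcal{B}\subseteq\mathcal{A}$), we get
\[
\delta_\beta|_{S} = (\delta_\beta|_{\mathcal{I}_\beta})|_{S} = (\omega|_{\mathcal{I}_\beta})|_{S} = \omega|_{S} = (\omega|_{\mathcal{O}_\gamma})|_{S} = (\delta_\gamma|_{\mathcal{O}_\gamma})|_{S} = \delta_\gamma|_{S}.
\]
Thus both $\delta_\beta$ and $\delta_\gamma$ restrict on the common context $S$ to the single distribution $\omega|_S$, which is precisely the distribution the measurement scenario assigns to that context. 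This shows the compatibility condition holds for every overlap of two maximal contexts, and since any context is contained in some maximal $\mathcal{U}_\beta$, the general compatibility condition $\pi_{\mathcal{U}}|_{\mathcal{U}'} = \pi_{\mathcal{U}'}$ follows.

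I expect the main obstacle to be purely bookkeeping rather than mathematical: one must argue carefully that the scenario's distribution on a \emph{non-maximal} context $\mathcal{U}'$ is well defined, i.e.\ that if $\mathcal{U}'$ happens to sit inside two different maximal contexts $\mathcal{U}_\beta$ and $\mathcal{U}_\gamma$, then $\delta_\beta|_{\mathcal{U}'} = \delta_\gamma|_{\mathcal{U}'}$ — which is exactly the displayed chain of equalities above, now applied with $S$ replaced by an arbitrary common subcontext $\mathcal{U}'$ (still forced by the no-reciprocity hypothesis to lie in $\mathcal{I}_\beta \cap \mathcal{O}_\gamma$ or symmetrically). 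Once that well-definedness is established, the no-signalling condition is immediate. A secondary point worth stating explicitly is that each $\delta_\beta$ is genuinely a probability distribution: its coefficients are products of a stochastic-matrix entry of $\beta$ with a coefficient of the probability distribution $\omega|_{\mathcal{I}_\beta}$, hence nonnegative, and they sum to $1$ because summing over outputs first gives $\sum_o \beta(\,i\mid o\,) = 1$ for each $i$, leaving $\sum_i \omega|_{\mathcal{I}_\beta}(i) = 1$ — this is essentially the first equality of Theorem~\ref{first theorem} read as a normalization statement.
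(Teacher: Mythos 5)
Your proposal is correct and follows essentially the same route as the paper: reduce the no-signalling condition to agreement of $\delta_\beta$ and $\delta_\gamma$ on the intersection of two maximal contexts, use the no-reciprocity hypothesis to place that intersection inside $\mathcal{O}$ of one node and $\mathcal{I}$ of the other, and then chain through $\omega$ via Theorem~\ref{first theorem} and transitivity of marginalization. Your added remarks on well-definedness at non-maximal contexts and on $\delta_\beta$ being a genuine probability distribution are sound bookkeeping that the paper leaves implicit.
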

\begin{proof}
It is sufficient to prove that given two nodes $\alpha$ and $\beta$ of the network we have
$$
\delta_{\alpha}|_{\mathcal{U}_{\alpha} \cap \mathcal{U}_{\beta}}
=
\delta_{\beta}|_{\mathcal{U}_{\alpha} \cap \mathcal{U}_{\beta}}
$$
Since the network has no reciprocities,
we can assume without loss of generality that all
the arrows in $\mathcal{U}_{\alpha} \cap \mathcal{U}_{\beta}$  go from $\alpha$ to $\beta$, that is, the variables in $\mathcal{U}_{\alpha} \cap \mathcal{U}_{\beta}$ are output variables of $\alpha$ and at the same time they are input variables of $\beta$. This means that
$$
\mathcal{U}_{\alpha} \cap \mathcal{U}_{\beta}  =  \mathcal{O}_{\alpha} \cap \mathcal{I}_{\beta}
$$
By Theorem \ref{first theorem} we have
\begin{align}
\delta_{\alpha}|_{\mathcal{U}_{\alpha} \cap \mathcal{U}_{\beta}}  &=
\delta_{\alpha}|_{\mathcal{O}_{\alpha} } |_{\mathcal{U}_{\alpha} \cap \mathcal{U}_{\beta}}
\\ &=
\omega|_{\mathcal{O}_{\alpha} }|_{\mathcal{U}_{\alpha} \cap \mathcal{U}_{\beta}}
\\ &=
\omega|_{\mathcal{U}_{\alpha} \cap \mathcal{U}_{\beta}}
\\ &=
\omega|_{\mathcal{I}_{\beta} }|_{\mathcal{U}_{\alpha} \cap \mathcal{U}_{\beta}}
\\ &=
\delta_{\beta}|_{\mathcal{I}_{\beta} } |_{\mathcal{U}_{\alpha} \cap \mathcal{U}_{\beta}}
\\ &=
\delta_{\beta}|_{\mathcal{U}_{\alpha} \cap \mathcal{U}_{\beta}}
\end{align}
\end{proof}

\section{Contextuality in networks of stochastic processes}~\label{context}




Theorem \ref{Main result} establishes that any stationary distribution in a closed network of stochastic processes without reciprocities gives rise to an empirical model. To determine whether such model is contextual or not requires further conditions. The main result in \cite{Vorobev} indicates that all empirical models over a given simplicial complex will be non-contextual if the latter is {\em regular}.\footnote{A simplicial complex $S$ is {\em regular} iff it belongs to the smallest class of simplicial complexes $\mathcal{S}$ such that: $(a)$ includes the class of all the proper subsets of vertices of cardinality $n$, for every possible $n$; and $(b)$ given $T \in \mathcal{S}$, $\hat{T} \in T$ and a set of vertices $A$ that does not belong to $T$, if we add all the subsets of $\hat{T} \cup A$ to $T$, we obtain another simplicial complex in $\mathcal{S}$ \cite{Vorobev}(def. 2.1).}  This implies, in our setting, that a necessary condition for the extendability of a given model is that the probability distributions over {\em all} the combinations of variables in the empirical model must be consistent. But in the network $\mathcal{N}$, supporting the empirical model, the requirement of consistency is imposed only on the internal distributions of the nodes and on those corresponding to connections among them. Furthermore, since the stochastic processes at the nodes are independent of each other, the distributions that matter for the definition of the empirical model have a natural local nature. This opens the possibility of defining a non-extendable empirical model with these features\footnote{A similar argument for the possible origin of contextuality in a combinatorial representation of quantum systems is presented in \cite{Fritz}.} Example \ref{example} illustrates that the focus on only some joint distributions on the set of variables (instead of over all possible combinations) can be such a source of contextuality.

We will in fact show that distributions like those of Example \ref{example} can be commonplace in our framework. This indicates that the empirical models supported by networks of stochastic processes are not extendable in general. That is, in a very natural sense we can say that contextuality arises as a generic property of our empirical models.

Let $\alpha$, $\beta$ and $\gamma$ be three open stochastic processes, each one with just one pair consisting of an input and an output variable, both assumed to be boolean.  That is, taking values in $\{T,F\}$. The corresponding matrices are:
$$
\alpha
\left(
\begin{matrix}
 X'  \\
x'
\end{matrix} \  \
\middle\vert \  \
\begin{matrix}
Y    \\
y
\end{matrix}
\right)
=
\left\{
	 \begin{array}{lll}
1 & \mbox{if}\ \  y=\neg x'  \\
0 & \mbox{otherwise}
\end{array}
\right.
$$
$$
\beta
\left(
\begin{matrix}
 Y'  \\
y'
\end{matrix} \  \
\middle\vert \  \
\begin{matrix}
Z    \\
z
\end{matrix}
\right)
=
\left\{
	 \begin{array}{lll}
1 & \mbox{if}\ \  z=y'  \\
0 & \mbox{otherwise}
\end{array}
\right.
$$
$$
\gamma
\left(
\begin{matrix}
 Z'  \\
z'
\end{matrix} \  \
\middle\vert \  \
\begin{matrix}
X    \\
x
\end{matrix}
\right)
=
\left\{
	 \begin{array}{lll}
1 & \mbox{if}\ \  x=z'  \\
0 & \mbox{otherwise}
\end{array}
\right.
$$

Now we compose these processes by connecting the output variable $Y$ of $\alpha$ with the input variable $Y'$ of $\beta$, as well as $Z$ of $\beta$ with $Z'$ of $\gamma$ and  $X$ of $\gamma$ to $X'$ of $\alpha$. The result is a closed network as shown in Figure \ref{8}.

\begin{figure}[ht]
\centering
\includegraphics[scale=0.6]{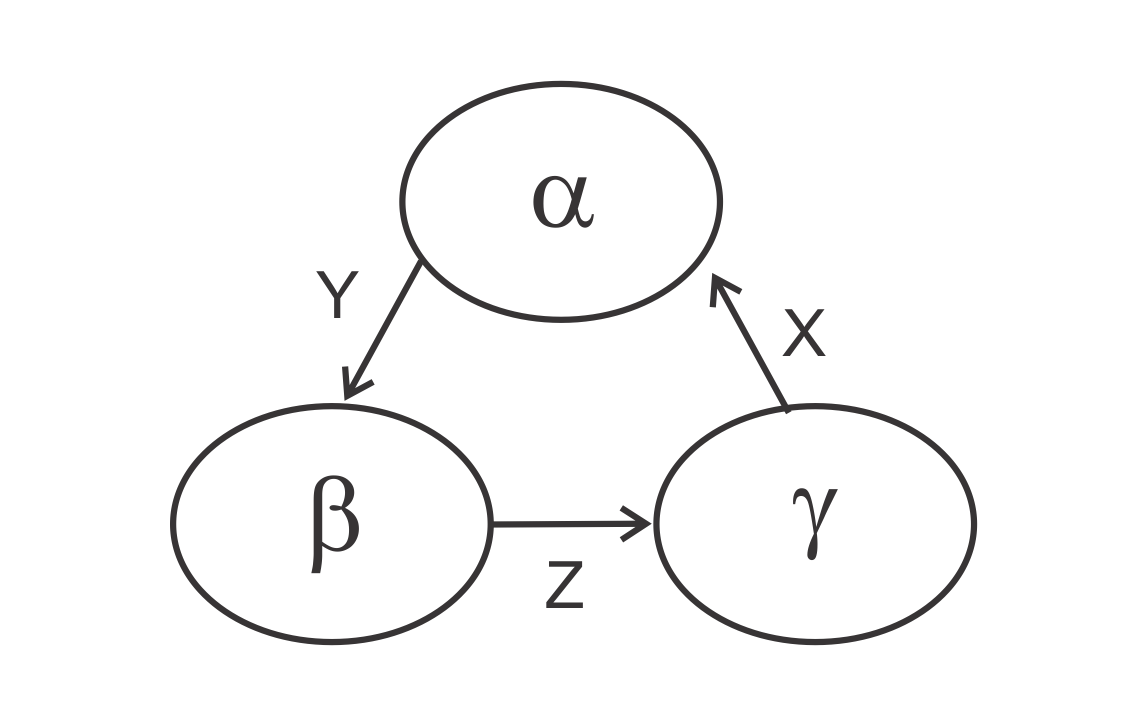}
\caption{Graphical representation of the example.}
\label{8}
\end{figure}

The global process $\sigma$ attached to this network is given by
$$
\sigma
\left(
\begin{matrix}
 X & Y & Z \\
x' & y' & z'
\end{matrix} \  \
\middle\vert \  \
\begin{matrix}
X & Y & Z   \\
x &y & z
\end{matrix}
\right)
=
\alpha
\left(
\begin{matrix}
 X  \\
x'
\end{matrix} \  \
\middle\vert \  \
\begin{matrix}
Y    \\
y
\end{matrix}
\right)
\ \
\beta
\left(
\begin{matrix}
 Y  \\
y'
\end{matrix} \  \
\middle\vert \  \
\begin{matrix}
Z    \\
z
\end{matrix}
\right)
\ \
\gamma
\left(
\begin{matrix}
 Z  \\
z'
\end{matrix} \  \
\middle\vert \  \
\begin{matrix}
X    \\
x
\end{matrix}
\right)
$$
that is,
$$
\sigma
\left(
\begin{matrix}
 X & Y & Z \\
x' & y' & z'
\end{matrix} \  \
\middle\vert \  \
\begin{matrix}
X & Y & Z   \\
x &y & z
\end{matrix}
\right)
=
\left\{
	 \begin{array}{lll}
1 & \mbox{if}\ \  (y= \neg x') \wedge (z=y') \wedge (x=z' ) \\
0 & \mbox{otherwise}
\end{array}
\right.
$$
The state space of the network is given by the eight possible values of the joint variable $(X,Y,Z)$. The dynamics determined by $\sigma$ can follow two cycles, one of which goes through the states $(x,y,z)$ such that $(x=y) \wedge (z=\neg x)$ and the other through those such that $(x=z) \wedge (y=\neg x)$.

The process $\sigma$ yields an infinite family of stationary distributions. Let us focus on the following one:
$$
\omega
\left(
\begin{matrix}
X & Y & Z   \\
x &y & z
\end{matrix}
\right)
=
\left\{
	 \begin{array}{lll}
0 & \mbox{if}\ \  (x=y) \wedge (z=\neg x) \\
1/6 & \mbox{otherwise}
\end{array}
\right.
$$
We can easily check that $\omega$ is indeed stationary.\footnote{That is, $\omega$ is a fixed point of the mapping given by
$$
\sum_{x',y',z'}
\sigma
\left(
\begin{matrix}
 X & Y & Z \\
x' & y' & z'
\end{matrix} \  \
\middle\vert \  \
\begin{matrix}
X & Y & Z   \\
x & y & z
\end{matrix}
\right)
\
\omega
\left(
\begin{matrix}
  X   & Y & Z  \\
  x'   & y'   & z'
  \end{matrix}
\right)
$$
}
Note that the marginals $\omega|_{X}$, $\omega|_{Y}$ and $\omega|_{Z}$ are uniform distributions, that is
$$
\omega|_{X}
\left(
\begin{matrix}
   X  \\
  x
  \end{matrix}
\right)
= 1/2 \ \ \mbox{for all} \ \ x
$$
$$
\omega|_{Y}
\left(
\begin{matrix}
   Y  \\
  y
  \end{matrix}
\right)
= 1/2 \ \ \mbox{for all} \ \ y
$$
$$
\omega|_{Z}
\left(
\begin{matrix}
   Z  \\
  z
  \end{matrix}
\right)
= 1/2 \ \ \mbox{for all} \ \ z
$$
Then, according to the construction leading to Theorem \ref{Main result}, the empirical model attached to the network is given by:\footnote{Note that, in general, the distributions $\delta_{\alpha}$,$\delta_{\beta}$ and $\delta_{\gamma}$ do not coincide with $\omega|_{X,Y}$, $\omega|_{Y,Z}$ and $\omega|_{Z,X}$ respectively. In this case we have, for instance that (analogously for $(Y,Z)$ and $(Z,X)$)
$$
\omega|_{X,Y}
\left(
\begin{matrix}
X &Y   \\
x  & y
\end{matrix}
\right)
=
\left\{
\begin{array}{lll}
1/6 & \mbox{if}\ \  x=y \\
1/3 & \mbox{otherwise}
\end{array}
\right.
$$
}

$$
\delta_{\alpha}
\left(
\begin{matrix}
  X &Y   \\
  x  & y
\end{matrix}
\right)
= \ \
\alpha
\left(
\begin{matrix}
 X  \\
x
\end{matrix} \  \
\middle\vert \  \
\begin{matrix}
Y    \\
y
\end{matrix}
\right)
\ \
\omega|_{X}
\left(
\begin{matrix}
   X  \\
  x
  \end{matrix}
\right)
=
\left\{
	 \begin{array}{lll}
1/2 & \mbox{if}\ \  x=\neg y \\
0 & \mbox{otherwise}
\end{array}
\right.
$$

$$
\delta_{\beta}
\left(
\begin{matrix}
  Y &Z   \\
  y  & z
\end{matrix}
\right)
= \ \
\beta
\left(
\begin{matrix}
 Y  \\
y
\end{matrix} \  \
\middle\vert \  \
\begin{matrix}
Z    \\
z
\end{matrix}
\right)
\ \
\omega|_{Y}
\left(
\begin{matrix}
   Y  \\
  y
  \end{matrix}
\right)
=
\left\{
	 \begin{array}{lll}
1/2 & \mbox{if}\ \  y=z \\
0 & \mbox{otherwise}
\end{array}
\right.
$$

$$
\delta_{\gamma}
\left(
\begin{matrix}
  Z &X   \\
  z  & x
\end{matrix}
\right)
= \ \
\gamma
\left(
\begin{matrix}
 Z  \\
z
\end{matrix} \  \
\middle\vert \  \
\begin{matrix}
X    \\
x
\end{matrix}
\right)
\ \
\omega|_{Z}
\left(
\begin{matrix}
   Z  \\
  z
  \end{matrix}
\right)
=
\left\{
	 \begin{array}{lll}
1/2 & \mbox{if}\ \  z=x \\
0 & \mbox{otherwise}
\end{array}
\right.
$$
These distributions over $(X,Y)$, $(Y,Z)$ and $(Z,X)$ are the same as those of Example \ref{example}.

\section{Comparison to quantum contextuality }
\label{Comparison to quantum contextuality}

The previous example suggests that networks of processes can support empirical models exhibiting a strong form of contextuality. As a way of gauging the strength of contextuality, in this section we consider the familiar Clauser-Horne-Shimony-Holt (CHSH) quantum scenario
 \cite{CHSH}.

The CHSH scenario consists of four observables $A_1$, $A_2$, $B_1$ and $B_2$ together with four contexts $(A_1,B_1)$, $(A_2,B_1)$, $(A_1,B_2)$ and $(A_2,B_2)$. The set of outcomes of each variable is $\{0,1\}$. This empirical model can be defined in our framework as follows. We attach a stochastic process $\alpha_{ij}$ to each context $(A_i,B_j)$. The set of outcomes of each variable is $\{0,1\}$. The matrices defining these stochastic processes are given by

$$
\alpha_{11}
\left(
\begin{matrix}
 A'_1  \\
a'_1
\end{matrix} \  \
\middle\vert \  \
\begin{matrix}
B_1    \\
b_1
\end{matrix}
\right)
=
\left\{
	 \begin{array}{lll}
1 & \mbox{if}\ \  a'_1=\neg b_1  \\
0 & \mbox{otherwise}
\end{array}
\right.
$$
$$
\alpha_{21}
\left(
\begin{matrix}
 B'_1  \\
b'_1
\end{matrix} \  \
\middle\vert \  \
\begin{matrix}
A_2    \\
a_2
\end{matrix}
\right)
=
\left\{
	 \begin{array}{lll}
1 & \mbox{if}\ \  b_1= a'_2  \\
0 & \mbox{otherwise}
\end{array}
\right.
$$
$$
\alpha_{22}
\left(
\begin{matrix}
 A'_2  \\
a'_2
\end{matrix} \  \
\middle\vert \  \
\begin{matrix}
B_2    \\
b_2
\end{matrix}
\right)
=
\left\{
	 \begin{array}{lll}
1 & \mbox{if}\ \  a'_2= b_2  \\
0 & \mbox{otherwise}
\end{array}
\right.
$$
$$
\alpha_{12}
\left(
\begin{matrix}
 B'_2  \\
b'_2
\end{matrix} \  \
\middle\vert \  \
\begin{matrix}
A_1    \\
a_1
\end{matrix}
\right)
=
\left\{
	 \begin{array}{lll}
1 & \mbox{if}\ \  b'_2=\neg a_1  \\
0 & \mbox{otherwise}
\end{array}
\right.
$$

As in the previous section, we compose these processes to build a closed network. Namely, we connect the variable $A'_i$ with $A_i$ and the variable $B'_i$ with $B_i$, for $i=1,2$.
The resulting global process $\sigma$ is given by

$$
\sigma
\left(
\begin{matrix}
 A_1 & B_1 & A_2 & B_2 \\
a'_1 & b'_1 & a'_2 & b'_2
\end{matrix} \  \
\middle\vert \  \
\begin{matrix}
A_1 & B_1 & A_2 & B_2 \\
a_1 & b_1 & a_2 & b_2
\end{matrix}
\right)
=
$$
$$
\alpha_{11}
\left(
\begin{matrix}
 A_1  \\
a'_1
\end{matrix} \  \
\middle\vert \  \
\begin{matrix}
B_1    \\
b_1
\end{matrix}
\right)
\ \
\alpha_{21}
\left(
\begin{matrix}
 B_1  \\
b'_1
\end{matrix} \  \
\middle\vert \  \
\begin{matrix}
A_2    \\
a_2
\end{matrix}
\right)
\ \
\alpha_{22}
\left(
\begin{matrix}
 A_2  \\
a'_2
\end{matrix} \  \
\middle\vert \  \
\begin{matrix}
B_2    \\
b_2
\end{matrix}
\right)
\alpha_{12}
\left(
\begin{matrix}
 B_2  \\
b'_2
\end{matrix} \  \
\middle\vert \  \
\begin{matrix}
A_1    \\
a_1
\end{matrix}
\right) =
$$

$$
=
\left\{
	 \begin{array}{lll}
1 & \mbox{if}\ \  (a'_1= \neg b_1) \wedge (b'_1=a_2) \wedge (a'_2=b_2 )  \wedge (b'_2=a_1 ) \\
0 & \mbox{otherwise}
\end{array}
\right.
$$

Since this is a permutation matrix, the uniform distribution on the state space of the joint variable $(A_1, B_1, A_2, B_2)$ is stationary. We can thus define

$$
\omega
\left(
\begin{matrix}
A_1 & B_1 & A_2 & B_2   \\
a_1 & b_1 & a_2 & b_2 
\end{matrix}
\right)
= 1/16
$$
for all $a_1 , b_1 , a_2 , b_2 $.
The marginals $\omega|_{A_1}$, $\omega|_{B_1}$, $\omega|_{A_2}$ and $\omega|_{B_2}$ are uniform distributions. Then the empirical model attached to the network is given by

$$
\delta_{\alpha_{11}}
\left(
\begin{matrix}
  A_1 & B_1   \\
  a_1  & b_1
\end{matrix}
\right)
= \ \
\alpha_{11}
\left(
\begin{matrix}
 A_1  \\
a_1
\end{matrix} \  \
\middle\vert \  \
\begin{matrix}
B_1    \\
b_1
\end{matrix}
\right)
\ \
\omega|_{A_1}
\left(
\begin{matrix}
   A_1  \\
  a_1
  \end{matrix}
\right)
=
\left\{
	 \begin{array}{lll}
1/2 & \mbox{if}\ \  a_1=\neg b_1 \\
0 & \mbox{otherwise}
\end{array}
\right.
$$
$$
\delta_{\alpha_{21}}
\left(
\begin{matrix}
  B_1 & A_2   \\
  b_1  & a_2
\end{matrix}
\right)
= \ \
\alpha_{21}
\left(
\begin{matrix}
 B_1  \\
b_1
\end{matrix} \  \
\middle\vert \  \
\begin{matrix}
A_2    \\
a_2
\end{matrix}
\right)
\ \
\omega|_{B_1}
\left(
\begin{matrix}
   B_1  \\
  b_1
  \end{matrix}
\right)
=
\left\{
	 \begin{array}{lll}
1/2 & \mbox{if}\ \  b_1= a_2 \\
0 & \mbox{otherwise}
\end{array}
\right.
$$
$$
\delta_{\alpha_{22}}
\left(
\begin{matrix}
  A_2 & B_2   \\
  a_2  & b_2
\end{matrix}
\right)
= \ \
\alpha_{22}
\left(
\begin{matrix}
 A_2  \\
a_2
\end{matrix} \  \
\middle\vert \  \
\begin{matrix}
B_2    \\
b_2
\end{matrix}
\right)
\ \
\omega|_{A_2}
\left(
\begin{matrix}
   A_2  \\
  a_2
  \end{matrix}
\right)
=
\left\{
	 \begin{array}{lll}
1/2 & \mbox{if}\ \  a_2= b_2 \\
0 & \mbox{otherwise}
\end{array}
\right.
$$
$$
\delta_{\alpha_{12}}
\left(
\begin{matrix}
  B_2 & A_1   \\
  b_2  & a_1
\end{matrix}
\right)
= \ \
\alpha_{12}
\left(
\begin{matrix}
 B_2  \\
b_2
\end{matrix} \  \
\middle\vert \  \
\begin{matrix}
A_1    \\
a_1
\end{matrix}
\right)
\ \
\omega|_{B_2}
\left(
\begin{matrix}
   B_2  \\
  b_2
  \end{matrix}
\right)
=
\left\{
	 \begin{array}{lll}
1/2 & \mbox{if}\ \  b_2= a_1 \\
0 & \mbox{otherwise}
\end{array}
\right.
$$
We can see that this coincides with the definition of Popescu-Rohrlich (PR) box \cite{PR}:
\begin{align*}
 P(A_1=0 \wedge B_1=1)=P(A_1=1 \wedge B_1=0) =  1/2 \\
 P(A_1=0 \wedge B_2=0)=P(A_1=1 \wedge B_2=1) =  1/2\\
 P(A_2=0 \wedge B_1=0)=P(A_2=1 \wedge B_1=1) = 1/2\\
 P(A_2=0 \wedge B_2=0)=P(A_2=1 \wedge B_2=1) = 1/2
 \end{align*}

It is known that the PR box constitutes a no-signalling model that achieves super-quantum correlations by violating Tsirelson's \cite{bound} quantum contextuality bound  of $2\sqrt{2}$ for the CHSH value \cite{Marco} . This means that our construction is not restricted to quantum contextuality, but it corresponds to that of generalized probabilistic theories (see for example \cite{Cabello}).

\section{Conclusions}~\label{conclusions}

While the result presented in the previous sections is just an instance, we can hint at its generality. Networks are combinations of arbitrary stochastic processes and the variables of non directly connected nodes can be distributed independently. 

The previously known sources of contextuality were of quantum nature. In this paper we introduced a classical model, namely that of a network of stochastic processes. While the ensuing structure shares with quantum systems the property of contextuality it supports a stronger version of this property.

\end{document}